	\newtheorem{assumption}{\textbf{Assumption}}
	\newtheorem{lemma}{\textbf{Lemma}}
	\newtheorem{definition}{\textbf{Definition}}
	\newtheorem{theorem}{\textbf{Theorem}}
	\newtheorem{remark}{\textbf{Remark}}
	\newtheorem{problem}{\textbf{Problem}}
\newcommand{\T}{^{\mbox{\tiny T}}}
\newcommand{\R}{\mathbb{R}}
\newenvironment{proof}[1][Proof]%
{\par\noindent\textit{#1:\ }}%
{\hspace*{\fill} \rule{6pt}{6pt}}
\newenvironment{proof*}[1][Proof]%
{\par\noindent\textit{#1:\ }}{}
\DeclareMathOperator{\diag}{diag}
\DeclareMathOperator{\rank}{rank}
\newenvironment{system}[1]%
{\setlength{\arraycolsep}{0.5mm}\left\{ \; \begin{array}{#1}}%
	{\end{array} \right.}
\newenvironment{system*}[1]%
{\setlength{\arraycolsep}{0.5mm} \begin{array}{#1}}%
	{\end{array}}
\title{\LARGE \textbf{Scale-free Protocol Design for Output and Regulated Output Synchronization of Heterogeneous Multi-agent Systems}}
\author{Donya Nojavanzadeh, Zhenwei Liu, Ali Saberi and Anton A. Stoorvogel%
	\thanks{Donya Nojavanzadeh is with School of Electrical Engineering
		and Computer Science, Washington State University, Pullman, WA
		99164, USA {\tt\small donya.nojavanzadeh@wsu.edu}}
		\thanks{Zhenwei Liu is with College of Information Science and
		Engineering, Northeastern University, Shenyang 110819,
		P. R. China {\tt\small jzlzwsy@gmail.com}}
			\thanks{Ali Saberi is with
		School of Electrical Engineering and Computer Science, Washington
		State University, Pullman, WA 99164, USA {\tt\small
			saberi@eecs.wsu.edu}} 
			\thanks{Anton A. Stoorvogel is with
			Department of Electrical Engineering, Mathematics and Computer
			Science, University of Twente, P.O. Box 217, Enschede, The
			Netherlands {\tt\small A.A.Stoorvogel@utwente.nl}}
	}
\begin{document}

	\maketitle
\thispagestyle{empty}
\pagestyle{empty}

	\begin{abstract}
 In this paper, we consider scalable output and regulated output synchronization problems for heterogeneous networks of right-invertible
 linear agents based on
 localized information exchange where in the case of regulated output synchronization, the reference trajectory is generated by a so-called exosystem. We assume that all the agents are introspective, meaning that they have access to their own local measurements.
We propose a scale-free linear protocol for each agent to achieve output and regulated output synchronizations. These protocols are designed solely based on agent models and they need no information about communication graph and the number of agents or other agent models information.
\end{abstract}

\section{Introduction}

Synchronization problem of multi-agent systems (MAS) has become a hot topic among researchers in recent years. Cooperative control of MAS is used in practical application such as robot network, autonomous vehicles, distributed sensor network, and others. The objective of synchronization is to secure an asymptotic agreement on a common state or output trajectory by local interaction among agents (see \cite{bai-arcak-wen,mesbahi-egerstedt,ren-book,wu-book} and references therein).

State synchronization inherently requires homogeneous networks. Most works have focused on state synchronization where each agent has access to a linear combination of its own state relative to that of the neighboring agents, which is called full-state coupling \cite{saber-murray3,saber-murray,saber-murray2,ren-atkins,ren-beard-atkins,tuna1}. A more realistic scenario which is partial-state coupling (i.e. agents share part of their information over the network) is studied in \cite{tuna2,li-duan-chen-huang,pogromsky-santoboni-nijmeijer,tuna3}.

For heterogeneous network it is more reasonable to consider output synchronization since the dimensions of states and their physical interpretation may be different. Introspective agents possess some knowledge about their own states. For heterogeneous MAS with non-introspective agents, it is well-known that one needs to regulate outputs of the agents to a priori given trajectory generated by a so-called exosystem (see
\cite{wieland-sepulchre-allgower, grip-saberi-stoorvogel3}). Other works on synchronization of MAS with non-introspective agents can be found in the literature as  \cite{grip-yang-saberi-stoorvogel-automatica,grip-saberi-stoorvogel}.

On the other hand, for MAS with introspective agents, one can achieve output and regulated output synchronization.
Most of the literature for heterogeneous MAS with introspective agents are based on modifying the agent dynamics via local feedback to achieve some form of homogeneity. There have been many results for synchronization of heterogeneous networks with introspective agents, see for instance \cite{kim-shim-seo,yang-saberi-stoorvogel-grip-journal,li-soh-xie-lewis-TAC2019,modares-lewis-kang-davoudi-TAC2018,qian-liu-feng-TAC2019,chen-auto2019}.

In this paper, we propose \textbf{scale-free} protocol design to solve output and regulated output synchronization problems for heterogeneous MASs with introspective right-invertible agents. Scale-free protocols are designed based on localized information exchange with neighbors and do not require any knowledge of the communication network except connectivity. The protocol design is scale-free, namely,
\renewcommand\labelitemi{{\boldmath$\bullet$}}
\begin{itemize}
	\item  The design is independent of the information about communication networks such as spectrum of the associated Laplacian matrix. That is to say, the universal dynamical protocols work for any communication network as long as it is connected.
	\item  The dynamic protocols are designed solely based on agent models and do not depend on communication network and the number of agents.
	\item The proposed protocols achieve output and regulated output synchronization for heterogeneous MAS with any number of agents and any communication network.
\end{itemize}

\subsection*{Notations and definitions}

Given a matrix $A\in \mathbb{R}^{m\times n}$, $A\T$ denotes its
conjugate transpose. A square matrix
$A$ is said to be Hurwitz stable if all its eigenvalues are in the
open left half complex plane. We denote by
$\diag\{A_1,\ldots, A_N \}$, a block-diagonal matrix with
$A_1,\ldots,A_N$ as its diagonal elements. $A\otimes B$ depicts the
Kronecker product between $A$ and $B$. $I_n$ denotes the
$n$-dimensional identity matrix and $0_n$ denotes $n\times n$ zero
matrix; sometimes we drop the subscript if the dimension is clear from
the context.

To describe the information flow among the agents we associate a \emph{weighted graph} $\mathcal{G}$ to the communication network. The weighted graph $\mathcal{G}$ is defined by a triple
$(\mathcal{V}, \mathcal{E}, \mathcal{A})$ where
$\mathcal{V}=\{1,\ldots, N\}$ is a node set, $\mathcal{E}$ is a set of
pairs of nodes indicating connections among nodes, and
$\mathcal{A}=[a_{ij}]\in \mathbb{R}^{N\times N}$ is the weighted adjacency matrix with non negative elements $a_{ij}$. Each pair in $\mathcal{E}$ is called an \emph{edge}, where
$a_{ij}>0$ denotes an edge $(j,i)\in \mathcal{E}$ from node $j$ to
node $i$ with weight $a_{ij}$. Moreover, $a_{ij}=0$ if there is no
edge from node $j$ to node $i$. We assume there are no self-loops,
i.e.\ we have $a_{ii}=0$. A \emph{path} from node $i_1$ to $i_k$ is a
sequence of nodes $\{i_1,\ldots, i_k\}$ such that
$(i_j, i_{j+1})\in \mathcal{E}$ for $j=1,\ldots, k-1$. A \emph{directed tree} is a subgraph (subset
of nodes and edges) in which every node has exactly one parent node except for one node, called the \emph{root}, which has no parent node. A \emph{directed spanning tree} is a subgraph which is
a directed tree containing all the nodes of the original graph. If a directed spanning tree exists, the root has a directed path to every other node in the tree \cite{royle-godsil}.  

For a weighted graph $\mathcal{G}$, the matrix
$L=[\ell_{ij}]$ with
\[
\ell_{ij}=
\begin{system}{cl}
\sum_{k=1}^{N} a_{ik}, & i=j,\\
-a_{ij}, & i\neq j,
\end{system}
\]
is called the \emph{Laplacian matrix} associated with the graph
$\mathcal{G}$. The Laplacian matrix $L$ has all its eigenvalues in the
closed right half plane and at least one eigenvalue at zero associated
with right eigenvector $\textbf{1}$ \cite{royle-godsil}. 

\section{Problem Formulation}

We will study a MAS consisting of $N$ non-identical linear agents:
\begin{equation}\label{hete_sys}
\begin{system*}{cl}
\dot{x}_i&=A_ix_i+B_iu_i,\\
y_i&=C_ix_i,
\end{system*}
\end{equation}
where $x_i\in\mathbb{R}^{n_i}$, $u_i\in\mathbb{R}^{m_i}$ and $y_i\in\mathbb{R}^p$ are the state,
input, output of agent $i$
for $i=1,\ldots, N$.

The agents are introspective, meaning that each agent has access to its own local information. Specifically each agent has access to part of its state
\begin{equation}\label{local}
	z_i=C_i^mx_i.
\end{equation}
where $z_i\in \mathbb{R}^{q_i}$.

The communication network provides agent $i$ with the following information which is a linear combination of its own output relative to that of other agents:

\begin{equation}\label{zeta1}
\zeta_i=\sum_{j=1}^{N}a_{ij}(y_i-y_j)
\end{equation}
where $a_{ij}>0$ and $a_{ii}=0$. The communication topology of the network can be described by a weighted and directed graph $\mathcal{G}$ with nodes corresponding to the agents in the network and the weight of
edges given by the coefficient $a_{ij}$. In terms of the coefficients of the associated Laplacian matrix $L$, $\zeta_i$ can be rewritten as
\begin{equation}\label{zeta}
\zeta_i= \sum_{j=1}^{N}\ell_{ij}y_j.
\end{equation}

In this paper, we also introduce a localized information exchange among agents, namely, each agent $i\in\{1,...,N\}$ has access to localized information, denoted by $\hat{\zeta}_i$, of the form
\begin{equation}\label{etahat}
	\hat{\zeta}_i=\sum_{j=1}^{N}a_{ij}(\eta_i-\eta_j)
\end{equation}
where $\eta_i$ is a variable produced internally by agent $i$ and to be defined later.

 In order to explicitly state our problem formulation we need the following definition. 
\begin{definition}\label{def1}
	Let $\mathbb{G}^N$ denote the set of directed graphs of $N$ agents which contain a directed spanning tree.
\end{definition}

Now we formulate the problem of \textbf{scalable} output synchronization for a heterogeneous MAS.

\begin{problem}\label{prob_sync}
	Consider a heterogeneous network of $N$ agents \eqref{hete_sys} with local information \eqref{local} satisfying Assumption \ref{ass2}. Let the associated network communication be given by \eqref{zeta}. Let $\mathbb{G}^N$ be the set of network graphs as defined in Definition \ref{def1}. The \textbf{scalable output synchronization problem based on localized information exchange} is to find, if possible, a linear dynamic controller for each agent $i \in\{1, \dots, N\}$, using only knowledge of agent models, i.e. $(C_i,A_i,B_i)$, of the form: 
	\begin{equation}\label{out_dyn}
	\begin{system}{cl}
	\dot{x}_{i,c}&=A_{i,c}x_{i,c}+B_{i,c}\zeta_i+C_{i,c}\hat{\zeta}_i+D_{i,c}z_i,\\
	u_i&=E_{i,c}x_{i,c}+F_{i,c}\zeta_i+G_{i,c}\hat{\zeta}_i+H_{i,c}z_i,
	\end{system}
	\end{equation}
	 where $\hat{\zeta}_i$ is defined in \eqref{etahat} with $\eta_i=M_{i,c}x_{i,c}$, and $x_{c,i}\in\mathbb{R}^{n_i}$, 
	such that output synchronization 
	\begin{equation}\label{synch_out}
	\lim\limits_{t\to\infty}(y_i(t)-y_j(t))=0
	\end{equation}
 is achieved for any $N$ and any graph $\mathcal{G}\in\mathbb{G}^N$.
\end{problem}

Next, we consider regulated output synchronization where output of agents converge to a priori given trajectory $y_r$ generated by a so-called exosystem as
\begin{equation}\label{exo}
\begin{system*}{cl}
\dot{x}_r&=A_rx_r, \quad x_r(0)=x_{r0},\\
y_r&=C_rx_r,
\end{system*}
\end{equation}
where $x_r \in\mathbb{R}^r$ and $y_r\in\mathbb{R}^p$.

We assume a nonempty subset $\mathscr{C}$ of the agents which have access to their output relative to the output of the exosystem. In other word, each agent $i$ has access to the quantity 
\begin{equation}
\Psi_i=\iota_i(y_i-y_r), \qquad \iota_i=\begin{system}{cl}
1, \quad i\in \mathscr{C},\\
0, \quad i\notin \mathscr{C}.
\end{system}
\end{equation}
By combining this with \eqref{zeta1}, the information exchange among agents is given by

\begin{equation}\label{zetabar}
\tilde{\zeta}_i=\sum_{j=1}^{N}a_{ij}(y_i-y_j)+\iota_i(y_i-y_r).
\end{equation}
$\tilde{\zeta}_i$ as defined in above, can be rewritten in terms of the coefficients of a so-called expanded Laplacian matrix $\tilde{L}=L+diag\{\iota_i\}=[\tilde{\ell}_{ij}]_{N \times N}$ as
	\begin{equation}\label{zetabar2}
	\tilde{\zeta}_i=\sum_{j=1}^{N}\tilde{\ell}_{ij}(y_j-y_r).
	\end{equation}
	
Note that $\tilde{L}$ is not a regular Laplacian matrix associated to the graph, since the sum of its rows need not be zero. We know that all the eigenvalues of $\tilde{L}$, have positive real parts. In particular matrix $\tilde{L}$ is invertible.

To guarantee that each agent gets the information from the exosystem, we need to make sure that there exists a path from node set $\mathscr{C}$ to each node.  Therefore, we define the following set of graphs.
\begin{definition}\label{def_rootset}
	Given a node set $\mathscr{C}$, we denote by $\mathbb{G}_{\mathscr{C}}^N$ the set of all graphs with $N$ nodes containing the node set $\mathscr{C}$, such that every node of the network graph $\mathcal{G}\in\mathbb{G}_\mathscr{C}^N$ is a member of a directed tree
	which has its root contained in the node set $\mathscr{C}$. We will refer to the node set $\mathscr{C}$ as root set.
\end{definition}

\begin{remark}
	Note that Definition \ref{def_rootset} does not require necessarily the existence of directed spanning tree.
\end{remark}

Now we formulate the problem of \textbf{scalable} regulated output synchronization for a heterogeneous MAS.

\begin{problem}\label{prob_reg_sync}
	Consider a heterogeneous network of $N$ agents \eqref{hete_sys} with local information \eqref{local} satisfying Assumption \ref{ass2} and the associated exosystem \eqref{exo} satisfying Assumption \ref{ass-exo}. Let a set of nodes $\mathscr{C}$ be given which defines the set $\mathbb{G}^N_\mathscr{C}$. Let the associated network communication be given by \eqref{zetabar2}. The \textbf{scalable regulated output synchronization problem based on localized information exchange} is to find, if possible, a linear dynamic controller for each agent $i \in\{1, \dots, N\}$, using only knowledge of agent models, i.e. $(C_i, A_i, B_i)$, of the form:
	\begin{equation}\label{out_reg_dyn}
	\begin{system}{cl}
	\dot{x}_{i,c}&=A_{i,c}x_{i,c}+B_{i,c}\tilde{\zeta}_i+C_{i,c}\hat{\zeta}_i+D_{i,c}z_i,\\
	u_i&=E_{i,c}x_{i,c}+F_{i,c}\tilde{\zeta}_i+G_{i,c}\hat{\zeta}_i+H_{i,c}z_i,
	\end{system}
	\end{equation}
	 where $\hat{\zeta}_i$ is defined in \eqref{etahat} with $\eta_i=M_{i,c}x_{i,c}$, and $x_{c,i}\in\mathbb{R}^{n_i}$, 
	such that regulated output synchronization 
	\begin{equation}\label{reg_synch_out}
	\lim\limits_{t\to\infty}(y_i(t)-y_r(t))=0
	\end{equation}
	 is achieved for any $N$ and any graph $\mathscr{G}\in\mathbb{G}^N_\mathscr{C}$.
\end{problem}

In this paper, we make the following assumptions for agents and the exosystem.
\begin{assumption}\label{ass2}
	For agents $i \in \{1,\dots,N\}$, 
	\begin{enumerate}
		\item $(C_i,A_i,B_i)$ is stabilizable, detectable and right-invertible.
		\item $(C_i^m,A_i)$ is detectable. 
	\end{enumerate}
\end{assumption}

\begin{assumption}\label{ass-exo}
	For exosystem,
	\begin{enumerate}
		\item $(C_r, A_r)$ is observable.
		\item  All the eigenvalues of $A_r$ are on the imaginary axis.
	\end{enumerate}
\end{assumption}

\section{Scalable Output Synchronization}\label{OS}
In this section, we design protocols to solve scalable output synchronization problem as stated in Problem \ref{prob_sync}. After introducing the architecture of our protocol, we design the protocols through four steps.

\subsection{{Architecture of the protocol}}
Our protocol has the structure shown below in Figure \ref{Heterogeneous}.
\begin{figure}[ht]
	\includegraphics[width=8.3cm, height=4.3cm]{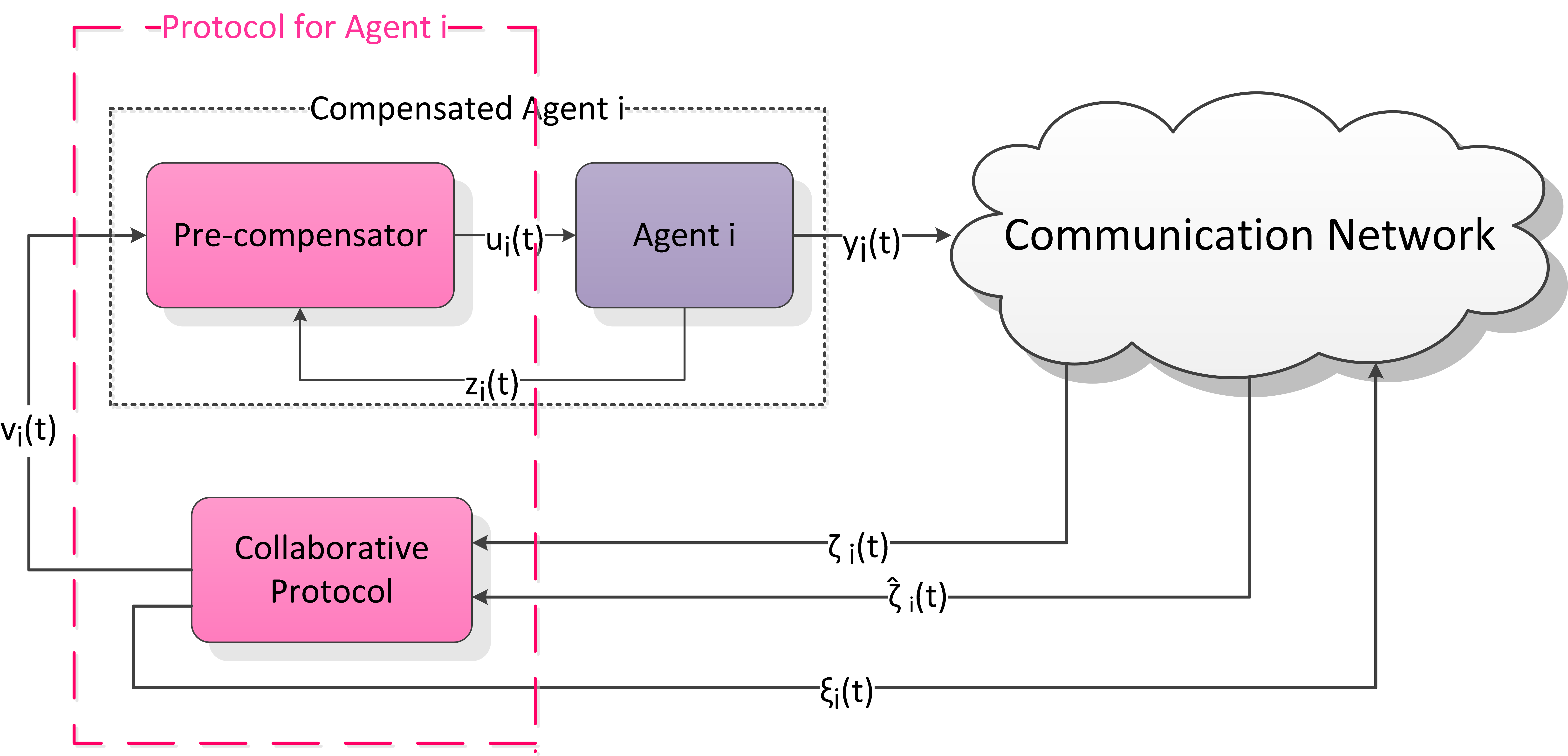}
	\centering
	\caption{Architecture of the protocol for output synchronization }\label{Heterogeneous}
	\vspace*{-.3cm}
\end{figure}

As seen in the above figure, our design methodology consists of two major modules.
\begin{enumerate}
	\item The first module is reshaping the dynamics of the agents to obtain the target model by designing pre-compensators following our previous results in \cite{yang-saberi-stoorvogel-grip-journal}.
	\item The second module is designing collaborate protocols for almost homogenized agents to achieve output and regulated output synchronization
\end{enumerate}

\subsection{Protocol design}
 We design our protocols through the following four steps.
 
\textbf{Step 1: choosing target model} First, we choose the suitable target model, i.e. $(C,A,B)$ such that the following conditions are satisfied.
\begin{enumerate}
	\item $\rank(C)=p$
	\item $(C, A, B)$ is invertible of uniform rank $n_q\ge\bar{n}_d$, and has no invariant zeros, where $\bar{n}_d$ denotes the maximal order of infinite zeros of $(C_i,A_i,B_i), i=1,\ldots,N$.
	\item eigenvalues of $A$ are in closed left half plane.
\end{enumerate}
\textbf{Step 2: designing pre-compensators} In this step, we design pre-compensators to reshape agent models to almost identical agents.  Given chosen target model $(C,A,B)$, by utilizing the design methodology from \cite[Appendix B]{yang-saberi-stoorvogel-grip-journal}, we design a pre-compensator for each agent $i \in \{1, \dots, N\}$, of the form
\begin{equation}\label{pre_con}
\begin{system}{cl}
\dot{\xi}_i&=A_{i,h}\xi_i+B_{i,h}z_i+E_{i,h}v_i,\\
u_i&=C_{i,h}\xi_i+D_{i,h}v_i,
\end{system}
\end{equation} 
which yields the compensated agents as
\begin{equation}\label{sys_homo}
\begin{system*}{cl}
\dot{\bar{x}}_i&=A\bar{x}_i+B(v_i+\rho_i),\\
{y}_i&=C\bar{x}_i,
\end{system*}
\end{equation} 
where $\rho_i$ is given by 
\begin{equation}\label{sys-rho}
\begin{system*}{cl}
\dot{\omega}_i&=A_{i,s}\omega_i,\\
\rho_i&=C_{i,s}\omega_i,
\end{system*}
\end{equation}
and $A_{i,s}$ is Hurwitz stable. Note that the compensated agents are homogenized and have the target model $(C, A, B)$. 

\textbf{Step 3: designing collaborative protocols for compensated agents}
 In this step, we design a dynamic protocol based on localized information exchange for compensated agents \eqref{sys_homo} and \eqref{sys-rho} as
\begin{equation}\label{pscp1}
\begin{system}{cl}
\dot{\hat{x}}_i&=A\hat{x}_i-BK\hat{\zeta}_i+H(\zeta_i-C\hat{x}_i)\\
\dot{\chi}_i&=A\chi_i+Bv_i+\hat{x}_i-\hat{\zeta}_i\\
v_i&=-K\chi_i
\end{system}
\end{equation}
where $H$ and $K$ are matrices such that $A-HC$ and $A-BK$ are Hurwitz stable. The agents communicate $\eta_i$ which is chosen as $\eta_i=\chi_i$. Therefore, each agent has access to the following information:
\begin{equation}\label{hatzeta}
	\hat{\zeta}_i=\sum_{j=1}^{N}a_{ij}(\chi_i-\chi_j),
\end{equation}
 and $\zeta_i$ is defined as \eqref{zeta}.
 
 \textbf{Step 4: obtaining protocols for the agents}
Finally, our protocol which is the combination of module $1$ and $2$ is as following.
\begin{equation}\label{pscp1final}
\begin{system}{cl}
\dot{\xi}_i&=A_{i,h}\xi_i+B_{i,h}z_i-E_{i,h}K\chi_i,\\
\dot{\hat{x}}_i&=A\hat{x}_i-BK\hat{\zeta}_i+H(\zeta_i-C\hat{x}_i)\\
\dot{\chi}_i&=A\chi_i-BK\chi_i+\hat{x}_i-\hat{\zeta}_i\\
u_i&=C_{i,h}\xi_i-D_{i,h}K\chi_i,
\end{system}
\end{equation}

  Then, we have the following theorem for output synchronization of heterogeneous MAS.
 
 \begin{theorem}\label{thm_out_syn}
	Consider a heterogeneous network of $N$ agents \eqref{hete_sys} with local information \eqref{local} satisfying Assumption \ref{ass2}. Let the associated network communication be given by \eqref{zeta}. Then, the scalable output synchronization problem as defined in Problem \ref{prob_sync} is solvable.  In particular, the dynamic protocol \eqref{pscp1final} with localized information \eqref{hatzeta} solves the scalable output synchronization problem based on localized information exchange for any $N$ and any graph
 	$\mathcal{G}\in\mathbb{G}^N$. 
 \end{theorem}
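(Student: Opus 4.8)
The plan is to show that the protocol \eqref{pscp1final} reduces the heterogeneous network to an equivalent homogeneous one via the pre-compensators of Step 2, and then to prove that the collaborative protocol of Step 3 drives the homogenized agents to output synchronization. By Step 2, each compensated agent satisfies \eqref{sys_homo} with the common target model $(C,A,B)$, where the mismatch $\rho_i$ decays exponentially because $A_{i,s}$ in \eqref{sys-rho} is Hurwitz. Hence I would treat the $\rho_i$ terms as vanishing perturbations and first analyze the nominal homogeneous dynamics driven by $v_i=-K\chi_i$. First I would stack the agent states and write the closed-loop system in terms of the Laplacian $L$ using the identities $\zeta_i=\sum_j \ell_{ij} y_j$ from \eqref{zeta} and $\hat{\zeta}_i=\sum_j \ell_{ij}\chi_j$ from \eqref{hatzeta}, so that the aggregate dynamics take the Kronecker form with $L\otimes(\cdot)$ blocks.

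The key device is a change of variables that exposes the error dynamics. I would introduce the estimation error $e_i=\bar x_i-\hat x_i$ for the observer in \eqref{pscp1} and a synchronization-type variable measuring deviation from the average (equivalently, the projection onto the image of $L$). Using $H$ such that $A-HC$ is Hurwitz, the observer error satisfies $\dot e_i=(A-HC)e_i$ plus the exponentially decaying $B\rho_i$ contribution, so $e_i\to 0$. Substituting $\hat x_i=\bar x_i-e_i$ into the $\chi_i$ and $v_i$ equations, the remaining coupled dynamics should collapse, after stacking, into a block-triangular system governed by $I_N\otimes(A-BK)$ together with a factor $L\otimes I$ acting on the $\hat\zeta$ channel. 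The crucial structural point is that the localized exchange variable $\hat\zeta_i$ built from $\chi_i$ makes the coupling enter only through $L$, so the synchronization error lives in the subspace complementary to $\mathbf 1$, on which $L$ is invertible; because the matrices $A-BK$ and $A-HC$ are Hurwitz \emph{independently} of the spectrum of $L$, stability on that subspace follows without any eigenvalue-dependent gain scaling. This is precisely what yields the scale-free property.

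The main obstacle I anticipate is proving that this last step is genuinely independent of the graph, i.e. that the coupled $(\hat x_i,\chi_i)$ dynamics are stable on the synchronization-error subspace for every $\mathcal{G}\in\mathbb{G}^N$ and every $N$ simultaneously, rather than merely for a fixed Laplacian. The difficulty is that after stacking, the error system is \emph{not} simply $L\otimes(A-BK)$ — which would reintroduce $L$'s eigenvalues into the stability condition — but rather a system in which $L$ appears only through the term $\hat x_i-\hat\zeta_i$ feeding $\chi_i$. I would argue that, because a directed spanning tree exists for $\mathcal{G}\in\mathbb{G}^N$, the zero eigenvalue of $L$ is simple with right eigenvector $\mathbf 1$; decomposing along $\mathbf 1$ and its complement, the component along $\mathbf 1$ corresponds to the common trajectory and the complementary component is governed by the Hurwitz matrices $A-BK$ and $A-HC$ whose stability is structurally decoupled from $L$. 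Establishing this decoupling rigorously — showing the Laplacian-dependent cross terms remain confined and do not destabilize the error subspace uniformly in $N$ and in the graph — is the technical heart of the argument; once it is in place, combining the exponential decay of $e_i$ and $\rho_i$ with the stability of the error subspace gives $y_i-y_j\to 0$ via \eqref{synch_out}, completing the proof.
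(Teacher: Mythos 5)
Your proposal contains a concrete misstep and, separately, leaves unproved exactly the step on which the whole result rests. The misstep is your choice of observer error $e_i=\bar{x}_i-\hat{x}_i$. In \eqref{pscp1} the injection term is $H(\zeta_i-C\hat{x}_i)$ with $\zeta_i=\sum_{j}\ell_{ij}y_j=C\sum_j \ell_{ij}\bar{x}_j$, so $\hat{x}_i$ is an estimate of the Laplacian-weighted combination $\sum_j\ell_{ij}\bar{x}_j$, \emph{not} of $\bar{x}_i$. Differentiating your $e_i$ leaves the non-vanishing mismatch $-HC\bigl(\sum_j\ell_{ij}\bar{x}_j-\bar{x}_i\bigr)$ together with uncancelled $Bv_i$ and $BK\hat{\zeta}_i$ terms, so $\dot e_i=(A-HC)e_i+(\text{decaying})$ is simply false. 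The paper instead first reduces to differences $\bar{x}_i^o=\bar{x}_i-\bar{x}_N$ (handling the $\mathbf{1}$-direction by differencing rather than by your projection), and then uses the variables $e=\bar{x}-\chi$ and $\bar{e}=(\bar{L}\otimes I)\bar{x}-\hat{x}$, where $\bar{L}$ is the reduced Laplacian of Lemma \ref{lemmaLbar}; the term $-BK\hat{\zeta}_i$ in the $\hat{x}_i$-dynamics is there precisely so that the $(\bar{L}\otimes BK)\chi$ contributions cancel in the $\bar{e}$-dynamics, which is what makes the system \eqref{x-e} block upper-triangular.

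The deeper gap is your claim that stability on the error subspace follows ``because $A-BK$ and $A-HC$ are Hurwitz independently of the spectrum of $L$,'' with the rigorous decoupling deferred as the ``technical heart.'' After the correct change of variables, the triangular system \eqref{bar_x} has diagonal blocks $I\otimes(A-BK)$, $I\otimes A-\bar{L}\otimes I$, $I\otimes(A-HC)$ and $A_s$, and the middle block retains the graph: its stability is \emph{not} a consequence of the Hurwitzness of $A-BK$ and $A-HC$. It holds because Step 1 of the design requires the target model $A$ to have all eigenvalues in the closed left half plane, so that $A-\lambda I$ is Hurwitz for every $\lambda$ with $\re\lambda>0$, while by Lemma \ref{lemmaLbar} the eigenvalues of $\bar{L}$ are exactly the nonzero eigenvalues of $L$, which have positive real part for any graph in $\mathbb{G}^N$. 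Your proposal never invokes this spectral condition on $A$, and without it the claimed decoupling fails: if $A$ had an eigenvalue $\mu$ with $\re\mu>0$, one could still choose $K$ and $H$ making $A-BK$ and $A-HC$ Hurwitz, yet for a graph whose nonzero Laplacian eigenvalues have real part smaller than $\re\mu$ the block $I\otimes A-\bar{L}\otimes I$ is unstable and synchronization is lost. The scale-free property is therefore not ``structural decoupling from $L$'' but the fact that the $L$-dependent stability condition $\re\lambda_i>0$ is satisfied uniformly over all $N$ and all graphs containing a directed spanning tree, with no gain tuning; supplying this argument is precisely what your proposal leaves open.
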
 

\begin{remark}
	It is interesting to note that for the case that agents are \textbf{homogeneous} and \textbf{non-introspective} one does not require designing pre-compensators, and obviously can achieve scalable output synchronization utilizing collaborative protocols proposed in \textit{step $3$} for homogeneous networks as long as the agents eigenvalues are in the closed left half plane.
\end{remark}	

To obtain this result, we recall the following lemma for Laplacian matrix $L$. 
\begin{lemma}[\cite{zhang-saberi-stoorvogel-delay}]\label{LbarL}\label{lemmaLbar}
	Let a Laplacian matrix $L\in \R^{N\times N}$ be given associated
	with a graph that contains a directed spanning tree. We define
	$\bar{L}\in \R^{(N-1)\times (N-1)}$ as the matrix
	$\bar{L}=[\bar{\ell}_{ij}]$ with
	\[
	\bar{\ell}_{ij} = \ell_{ij}-\ell_{Nj}.
	\]
	Then the eigenvalues of $\bar{L}$ are equal to the nonzero
	eigenvalues of $L$.
\end{lemma}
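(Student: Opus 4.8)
The plan is to exhibit an explicit similarity transformation that simultaneously displays the zero eigenvalue of $L$ and isolates $\bar L$ as the block carrying all the remaining eigenvalues. The structural facts I will lean on are that $\mathbf 1$ is a right eigenvector of $L$ for the eigenvalue $0$, i.e.\ $L\mathbf 1 = 0$, and that, because the graph contains a directed spanning tree, this zero eigenvalue is \emph{simple}. Simplicity is exactly what licenses the conclusion that the $N-1$ eigenvalues remaining after one copy of $0$ is removed are precisely the \emph{nonzero} eigenvalues of $L$; without it the reduced matrix could inherit a spurious zero and the claim would fail.

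Concretely, first I would introduce the change of coordinates $T=\left(\begin{smallmatrix} I_{N-1} & -\mathbf 1_{N-1}\\ 0 & 1\end{smallmatrix}\right)$, which is invertible (block upper-triangular with unit diagonal blocks) with inverse $T^{-1}=\left(\begin{smallmatrix} I_{N-1} & \mathbf 1_{N-1}\\ 0 & 1\end{smallmatrix}\right)$. Partitioning $L=\left(\begin{smallmatrix} L_{11} & L_{12}\\ L_{21} & L_{22}\end{smallmatrix}\right)$ conformably, the block $L_{11}\in\R^{(N-1)\times(N-1)}$ collects the entries $\ell_{ij}$ with $i,j\le N-1$, while the row block $L_{21}=(\ell_{N1},\dots,\ell_{N,N-1})$ holds the first $N-1$ entries of the last row of $L$. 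The eigenvector identity $L\mathbf 1 = 0$ reads blockwise as $L_{11}\mathbf 1_{N-1}+L_{12}=0$ and $L_{21}\mathbf 1_{N-1}+L_{22}=0$.

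Next I would form the product $TLT^{-1}$. Multiplying $L$ on the right by $T^{-1}$ and using the two block identities above collapses the second block column, giving $LT^{-1}=\left(\begin{smallmatrix} L_{11} & 0\\ L_{21} & 0\end{smallmatrix}\right)$; multiplying on the left by $T$ then yields the block lower-triangular matrix $TLT^{-1}=\left(\begin{smallmatrix} L_{11}-\mathbf 1_{N-1}L_{21} & 0\\ L_{21} & 0\end{smallmatrix}\right)$. The one place where entries matter is the identification of the top-left block with $\bar L$: since $\bigl(\mathbf 1_{N-1}L_{21}\bigr)_{ij}=\ell_{Nj}$, its $(i,j)$ entry equals $\ell_{ij}-\ell_{Nj}=\bar\ell_{ij}$, so that block is exactly $\bar L$. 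Because $T$ is invertible, $TLT^{-1}$ and $L$ have the same spectrum, and block-triangularity splits that spectrum as the eigenvalues of $\bar L$ together with the single extra $0$ sitting in the bottom-right corner.

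The main obstacle is conceptual rather than computational: I must justify that \emph{exactly one} zero is peeled off into the trivial corner, so that none of the $N-1$ eigenvalues of $\bar L$ is an accidental zero. This is precisely where the directed-spanning-tree hypothesis is indispensable, as it is equivalent to $0$ being an algebraically simple eigenvalue of $L$; I would state this as a standard fact about Laplacians of graphs possessing a spanning tree and cite it. Given simplicity, the $0$ in the bottom-right exhausts the zero eigenvalue of $L$, and the eigenvalues of $\bar L=L_{11}-\mathbf 1_{N-1}L_{21}$ are therefore exactly the nonzero eigenvalues of $L$, which is the assertion. Everything else is the routine verification that the similarity collapses the last coordinate and that the surviving block coincides with $\bar L$.
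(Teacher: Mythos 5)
Your proof is correct, and it takes a genuinely different route from the paper's, even though both rest on the same algebraic kernel: the paper opens by writing $\bar L = \begin{pmatrix} I & -\mathbf 1\end{pmatrix} L \begin{pmatrix} I \\ 0\end{pmatrix}$, which is exactly the top-left block of your $TLT^{-1}$, but instead of completing the similarity it argues eigenvector-by-eigenvector. For a nonzero eigenvalue $\lambda$ of $L$ with eigenvector $x$, the paper shows $\bar x = \begin{pmatrix} I & -\mathbf 1\end{pmatrix}x$ satisfies $\bar L\bar x = \lambda\bar x$ using $L\mathbf 1 = 0$, with $\bar x \neq 0$ precisely because $\lambda \neq 0$; conversely it sends an eigenvector $\bar x$ of $\bar L$ to $x = L\begin{pmatrix} I \\ 0\end{pmatrix}\bar x$. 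That argument produces an explicit two-way eigenvector correspondence, but it compares spectra only as sets: it is silent about algebraic multiplicities, and the converse map degenerates when $\lambda = 0$ (there $x$ may vanish), so the exclusion of a spurious zero from the spectrum of $\bar L$ is left implicit. Your block-triangularization $TLT^{-1} = \left(\begin{smallmatrix}\bar L & 0 \\ L_{21} & 0\end{smallmatrix}\right)$ instead yields the characteristic-polynomial identity $\det(sI_N - L) = s\,\det(sI_{N-1}-\bar L)$, which handles multiplicities automatically, and you correctly isolate the one external input needed: the directed-spanning-tree hypothesis enters only through the standard equivalence with algebraic simplicity of the zero eigenvalue of $L$, which guarantees that the single $0$ peeled into the corner exhausts the zero spectrum. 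That equivalence does require a citation, as you note, but with it your argument is complete and arguably tighter on the multiplicity bookkeeping; what the paper's version buys in exchange is the concrete eigenvector map between $L$ and $\bar L$, which your one-shot similarity does not exhibit.
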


\begin{proof}
	We have:
	\[
	\bar{L} = \begin{pmatrix} I & -\textbf{1} \end{pmatrix}
	L \begin{pmatrix} I \\ 0 \end{pmatrix}
	\]
	Assume that $\lambda$ is a nonzero eigenvalue of $L$ with eigenvector
	$x$, Then
	\[
	\bar{x} = \begin{pmatrix} I & -\textbf{1} \end{pmatrix} x
	\]
	where $\textbf{1}$ is a vector with all $1$'s, satisfies,
	\[
	\begin{pmatrix} I & -\textbf{1} \end{pmatrix} Lx =
	\begin{pmatrix} I & -\textbf{1} \end{pmatrix} \lambda x
	=\lambda \bar{x}
	\]
	and since $L\textbf{1}=0$ we find that 
	\[
	\bar{L} \bar{x} =
	\begin{pmatrix} I & -\textbf{1} \end{pmatrix} Lx =\lambda \bar{x}.
	\]
	This shows that  $\lambda$ is an eigenvector of $\bar{L}$ if
	$\bar{x}\neq0$. It is easily seen that $\bar{x}=0$ if and only if
	$\lambda=0$. Conversely if $\bar{x}$ is an eigenvector of $\bar{L}$
	with eigenvalue $\lambda$ then it is easily verified that
	\[
	x = L \begin{pmatrix} I \\ 0 \end{pmatrix} \bar{x}
	\]
	is an eigenvector of $L$ with eigenvalue $\lambda$.
\end{proof}

\begin{proof}[Proof of Theorem \ref{thm_out_syn}] Let
	$\bar{x}_i^o=\bar{x}_i-\bar{x}_N$, $y_i^o=y_i-y_N$, $\hat{x}_i^o=\hat{x}_i-\hat{x}_N$, and $\chi_i^o=\chi_i-\chi_N$. Then, we have 
	\[
	\begin{system*}{ll}
	\dot{\bar{x}}_i^o&=A\bar{x}_i^o+B(v_i-v_N+\rho_i-\rho_N),\\
	{y}_i^o&=C\bar{x}_i^o,\\
	\bar{\zeta}_i&=\zeta_i-\zeta_N=\sum_{j=1}^{N-1}\bar{\ell}_{ij}{y}_j^o,\\
	\dot{\hat{x}}_i^o&=A\hat{x}_i^o-BK(\hat{\zeta}_i-\hat{\zeta}_N)+H(\bar{\zeta}_i-C\hat{x}_i^o)\\
	\dot{\chi}_i^o&=A\chi_i^o+B(v_i-v_N)+\hat{x}_i^o-\sum_{j=1}^{N-1}\bar{\ell}_{ij}{\chi}_j^o\\
	v_i-v_N&=-K\chi_i^o
	\end{system*}
	\]
	where $\bar{\ell}_{ij}=\ell_{ij}-\ell_{Nj}$ for $i,j=1,\cdots,N-1$. According to Lemma \ref{lemmaLbar}, we have that eigenvalues of $\bar{L}=[\bar{\ell}_{ij}]_{(N-1)\times(N-1)}$ are equal to the nonzero eigenvalues of $L$.
We define
\begin{equation*}
	\bar{x}=\begin{pmatrix}
	\bar{x}_1^o\\ \vdots\\ \bar{x}_{N-1}^o
	\end{pmatrix},\hat{x}=\begin{pmatrix}
	\hat{x}_1^o\\ \vdots\\ \hat{x}_{N-1}^o
	\end{pmatrix},\chi=\begin{pmatrix}
	\chi_1^o\\ \vdots\\ \chi_{N-1}^o
	\end{pmatrix},\rho=\begin{pmatrix}
	\rho_1\\ \vdots\\ \rho_N\end{pmatrix},\omega=\begin{pmatrix}
	\omega_1\\ \vdots\\ \omega_N\end{pmatrix}
\end{equation*} 
then, we have the following closed-loop system:
\begin{equation}
	\begin{system}{cl}
	\dot{\bar{x}}&=(I\otimes A)\bar{x}-(I\otimes BK )\chi+(\Pi\otimes B)\rho\\
	\dot{\hat{x}}&=(I\otimes (A-HC))\hat{x}-(\bar{L}\otimes BK )\chi+(\bar{L}\otimes HC)\bar{x}\\
	\dot{\chi}&=(I\otimes(A-BK )-\bar{L}\otimes I)\chi+\hat{x}
	\end{system}
\end{equation}
with $\Pi=\begin{pmatrix}
I&-\mathbf{1}
\end{pmatrix}$.
By defining $e=\bar{x}-\chi$ and $\bar{e}=(\bar{L}\otimes I)\bar{x}-\hat{x}$, we can obtain
\begin{equation}\label{x-e}
\begin{system*}{cl}
	\dot{\bar{x}}&=(I\otimes (A-BK ))\bar{x}+(I\otimes BK )e+(\Pi\otimes B) C_s\omega\\
	\dot{e}&=(I\otimes A-\bar{L}\otimes I)e+\bar{e}+(\Pi\otimes B) C_s\omega\\
	\dot{\bar{e}}&=(I\otimes(A-HC))\bar{e}+(\bar{L}\Pi\otimes B) C_s\omega
	\end{system*}
\end{equation}
where $C_s=diag\{C_{i,s}\}$ for $i=1,\hdots ,N$. We obtain the synchronization result when $\lim_{t \to \infty}\bar{x}\to 0.$ By combining \eqref{sys-rho} and \eqref{x-e}, we have
\begin{multline}\label{bar_x}
\begin{pmatrix}
\dot{\bar{x}}\\\dot{e}\\\dot{\bar{e}}\\\dot{\omega}
\end{pmatrix}=\left(\begin{array}{cc}
I\otimes (A-BK )&I\otimes BK \\
0&I_{N-1}\otimes A-\bar{L}\otimes I\\
0&0\\
0&0
\end{array}\right.
\\
\left.\begin{array}{cc}
0&(\Pi\otimes B) C_s\\
I&(\Pi\otimes B) C_s\\
I\otimes(A-HC)&(\bar{L}\Pi\otimes B) C_s\\
0&A_s
\end{array}\right)
\begin{pmatrix}
\bar{x}\\e\\\bar{e}\\\omega
\end{pmatrix}
\end{multline}
where $A_s=diag\{A_{i,s}\}$ for $i=1,\hdots ,N$.   Since the eigenvalues $\lambda_1,\ldots, \lambda_N$ of $\bar{L}$ have positive real part, we have, we have
\begin{equation}\label{boundapl}
(T\otimes I)(I\otimes A-\bar{L}\otimes I)(T^{-1}\otimes I)=I\otimes A-\bar{\Lambda}\otimes I
\end{equation}
for a non-singular transformation matrix $T$, where
 \eqref{boundapl}  is upper triangular Jordan form with $A-\lambda_i I$ for $i=1,\cdots,N-1$ on the diagonal. Since all eigenvalues of $A$ are in the closed left half plane, $A-\lambda_i I$ is stable. Therefore, all eigenvalues of $I\otimes A-\bar{L}\otimes I$ have negative real part. Moreover, $A_s=\diag(A_{i,s})$ and $A-HC$ are Hurwitz stable, thus it implies we just need to prove the stability of $\dot{\bar{x}}=(I\otimes (A-BK ))\bar{x}$ based on the structure of \eqref{bar_x}. Obviously, since $A-BK$ is Hurwitz stable, we have $\bar{x}$ is asymptotically stable, i.e. $\lim_{t \to \infty}\bar{x}_i^o\to 0$. Thus, it implies $\lim_{t \to \infty}y_i^o=y_i-y_N=C\bar{x}_i^o\to 0$ which proves the result.
\end{proof}

\section{Scalable Regulated Output Synchronization}
\subsection{Architecture of the protocol}
The protocol for regulated output synchronization has two main modules as shown in figure \ref{Heterogeneous_reg}.
\begin{figure}[h]
	\includegraphics[width=8.3cm, height=4.3cm]{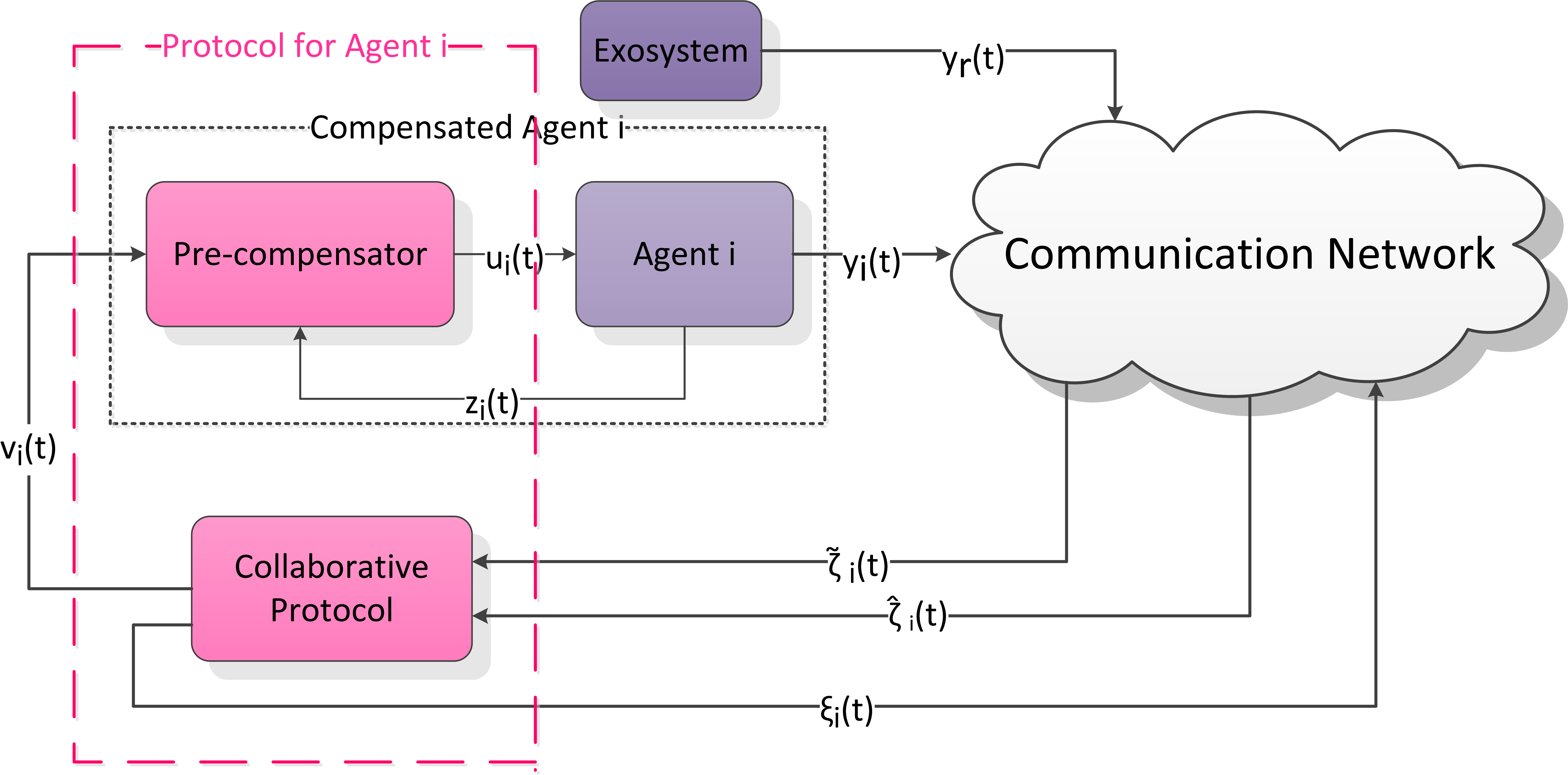}
	\centering
	\caption{Architecture of the protocol for regulated output synchronization}\label{Heterogeneous_reg}
	\vspace*{-.4cm}
\end{figure}
\subsection{Protocol design}
Similar to scalable output synchronization, for solving scalable regulated output synchronization, our design procedure consists of four steps. 

\textbf{Step 1: remodeling the exosystem} First, we remodel the exosystem to arrive at suitable choice for the target model $(\check{C}_r,\check{A}_r,\check{B}_r)$ following the design procedure in \cite{yang-saberi-stoorvogel-grip-journal} summarized in the following lemma.
\begin{lemma}[\cite{yang-saberi-stoorvogel-grip-journal}]\label{lem-exo} There exists another exosystem given by:
	\begin{equation}\label{exo-2}
	\begin{system*}{cl}
	\dot{\check{x}}_r&=\check{A}_r\check{x}_r, \quad \check{x}_r(0)=\check{x}_{r0}\\
	y_r&=\check{C}_r\check{x}_r,
	\end{system*}
	\end{equation}
	such that for all $x_{r0} \in \mathbb{R}^r$, there exists $\check{x}_{r0}\in \mathbb{R}^{\check{r}}$ for which \eqref{exo-2} generate exactly the same output $y_r$ as the original exosystem \eqref{exo}. Furthermore, we can find a matrix $\check{B}_r$ such that the triple $(\check{C}_r,\check{A}_r,\check{B}_r)$ is invertible, of uniform rank $n_q$, and has no invariant zero, where $n_q$ is an integer greater than or equal to maximal order of infinite zeros of $(C_i,A_i,B_i), i\in \{1,...,N\}$ and all the observability indices of $(C_r, A_r)$. Note that the eigenvalues of $\check{A}_r$ consists of all eigenvalues of $A_r$ and additional zero eigenvalues. 
\end{lemma}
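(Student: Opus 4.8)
My plan is to build the remodeled triple $(\check C_r,\check A_r,\check B_r)$ out of the observability structure of the pair $(C_r,A_r)$, exploiting the fact that the infinite-zero and invariant-zero structure of a linear system is invariant under output injection. First I would put the observable pair $(C_r,A_r)$ (Assumption \ref{ass-exo}) into a multivariable observer canonical form, which displays the state as $p$ chains whose lengths are the observability indices $\kappa_1,\dots,\kappa_p$, with the $k$-th output sitting at the top of the $k$-th chain and the coefficients of the characteristic polynomial of $A_r$ entering as output injection. Since $n_q$ is chosen at least as large as every observability index, each chain can be padded to the common length $n_q$ by adjoining integrators, the added modes being pure poles at the origin. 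This produces a pair $(\check C_r,\check A_r)$ that is observable with all observability indices equal to $n_q$, hence of state dimension $\check r=p\,n_q$, whose spectrum is exactly the spectrum of $A_r$ together with the $\check r-r$ extra zero eigenvalues.

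Next I would verify the output-reproduction property at the level of behaviors. The original output obeys $\phi\!\left(\tfrac{d}{dt}\right)y_r=0$ with $\phi(s)=\det(sI-A_r)$, and therefore also $\tfrac{d^{\,m}}{dt^{\,m}}\phi\!\left(\tfrac{d}{dt}\right)y_r=0$ for every $m\ge0$. Padding with integrators multiplies the realized output operator by exactly such a factor (with the total order $\check r-r$ distributed over the channels), so the set of output trajectories of $(\check C_r,\check A_r)$ contains that of $(C_r,A_r)$. Because the padded pair is again observable, the map $\check x_{r0}\mapsto\check C_r e^{\check A_r t}\check x_{r0}$ is injective with image equal to this larger behavior; hence for every $x_{r0}\in\R^r$ there is a (unique) $\check x_{r0}\in\R^{\check r}$ giving the identical $y_r$, which is the first claim.

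The substance of the lemma is the construction of $\check B_r$, and here I would write $\check A_r=A_d+L_dC_d$, where $(C_d,A_d,\check B_r)$ consists of $p$ decoupled integrator chains of length $n_q$ — so that $\check C_r=C_d$, $\check B_r$ selects the bottom state of each chain, and $L_d$ is the output-injection gain carrying the observer-form coefficients. The unperturbed triple $(C_d,A_d,\check B_r)$ has transfer matrix $s^{-n_q}I_p$: it is square and invertible, all its infinite zeros have order $n_q$, and since its state dimension $p\,n_q$ is entirely accounted for by this infinite-zero structure it has no invariant zeros. The key observation is that output injection leaves these properties intact, because the Rosenbrock matrix of $(\check C_r,\check A_r,\check B_r)$ equals that of $(C_d,A_d,\check B_r)$ after left multiplication by the constant invertible matrix $\left(\begin{smallmatrix}I&-L_d\\0&I\end{smallmatrix}\right)$; the normal rank, the invariant zeros, and the infinite-zero orders are therefore preserved. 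Thus $(\check C_r,\check A_r,\check B_r)$ is invertible, of uniform rank $n_q$, and free of invariant zeros, as required.

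The step I expect to demand the most care is making the two requirements compatible: the padding must simultaneously reproduce the exact output \emph{trajectories} (not merely the eigenvalues) and leave the system in a form whose zero structure can be read off once $\check B_r$ is adjoined. The reconciling device is precisely the output-injection invariance of the zero structure, which lets me decouple the analytic task of matching trajectories (handled by the observer-form / extra-differentiation argument) from the structural task of producing uniform rank $n_q$ with no invariant zeros (handled on the pure integrator chains). The remaining bookkeeping is routine: the assumption $n_q\ge\max_k\kappa_k$ both permits the padding and guarantees $r=\sum_k\kappa_k\le p\,n_q$, so that the $\check r-r$ adjoined eigenvalues are exactly zeros, and one must only confirm that the multivariable observer form genuinely realizes the indices $\kappa_k$.
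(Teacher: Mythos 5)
The paper itself offers no proof of this lemma: it is imported verbatim from \cite{yang-saberi-stoorvogel-grip-journal}, whose appendix contains the construction, so there is no in-paper argument to compare against. Your proof is correct and is essentially that reference's design procedure: pad the observer-form chains of the observable pair $(C_r,A_r)$ with integrators to the uniform length $n_q$ — equivalently, left-multiply a row-reduced AR matrix $D(s)$ with row degrees $\kappa_1,\dots,\kappa_p$ by $\diag(s^{n_q-\kappa_1},\dots,s^{n_q-\kappa_p})$, which preserves row-reducedness and immediately gives the behavior inclusion — then place $\check{B}_r$ at the bottom of the block integrator chains and use invariance of the invariant-zero and infinite-zero structure under output injection, via the identity $\left(\begin{smallmatrix} I & -L_d \\ 0 & I \end{smallmatrix}\right)\left(\begin{smallmatrix} sI-A_d & \check{B}_r \\ C_d & 0 \end{smallmatrix}\right)=\left(\begin{smallmatrix} sI-\check{A}_r & \check{B}_r \\ \check{C}_r & 0 \end{smallmatrix}\right)$, exactly as you wrote. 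The only wording worth tightening is the opening appeal to the scalar annihilator $\phi(\tfrac{d}{dt})$, which describes too large a behavior; the inclusion must be (and, per your parenthetical about distributing the order over the channels, effectively is) argued row-wise on the matrix AR representation, with surjectivity of $\check{x}_{r0}\mapsto \check{C}_r e^{\check{A}_r t}\check{x}_{r0}$ onto the padded behavior following from observability plus the dimension count $\dim \ker \check{D}(\tfrac{d}{dt})=\deg\det \check{D}(s)=\check{r}$.
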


 \textbf{Step 2: designing pre-compensators} Next, given the target model $(\check{C}_r,\check{A}_r,\check{B}_r)$, we design pre-compensators to achieve almost identical models as in \textit{step $2$} of section \ref{OS}. 
 
  \textbf{Step 3: designing collaborative protocols for the compensated agents} Collaborative protocols based on localized information exchanges are designed for the compensated agents $i=1,\hdots, N$ as
\begin{equation}\label{pscp2}
\begin{system}{cl}
\dot{\hat{x}}_i&=\check{A}_r\hat{x}_i-\check{B}_rK\hat{\zeta}_i+H(\tilde{\zeta}_i-\check{C}_r\hat{x}_i)+\iota_i \check{B}_r v_i,\\
\dot{\chi}_i&=\check{A}_r\chi_i+\check{B}_rv_i+\hat{x}_i-\hat{\zeta}_i-\iota_i\chi_i,\\
v_i&=-K\chi_i,
\end{system}
\end{equation}
where $H$ and $K$ are design matrices such that $\check{A}_r-H\check{C}_r$ and $\check{A}_r-\check{B}_rK$ are Hurwitz stable.
 The exchanging information $\hat{\zeta}_i$ is defined as \eqref{hatzeta} and $\tilde{\zeta}_i$ is defined as \eqref{zetabar2}. 
 
   \textbf{Step 4: obtaining the protocols}
The final protocol which is the combination of module $1$ and $2$ is
 	\begin{equation}\label{pscp2final}
 \begin{system}{cl}
 \dot{\xi}_i&=A_{i,h}\xi_i+B_{i,h}z_i-E_{i,h}K\chi_i,\\
\dot{\hat{x}}_i&=\check{A}_r\hat{x}_i-\check{B}_rK\hat{\zeta}_i+H(\tilde{\zeta}_i-\check{C}_r\hat{x}_i)-\iota_i \check{B}_r K\chi_i,\\
\dot{\chi}_i&=\check{A}_r\chi_i-\check{B}_rK\chi_i+\hat{x}_i-\hat{\zeta}_i-\iota_i\chi_i,\\
 u_i&=C_{i,h}\xi_i-D_{i,h}K\chi_i,
 \end{system}
 \end{equation} 
 
 Then, we have the following theorem for scalable regulated output synchronization of heterogeneous MAS.
  
\begin{theorem}\label{thm_reg_out_syn}
 	Consider a heterogeneous network of $N$ agents \eqref{hete_sys} satisfying Assumption \ref{ass2} with local information \eqref{local} and the associated exosystem \eqref{exo} satisfying Assumption \ref{ass-exo}. Then, the scalable regulated output synchronization problem as defined in Problem \ref{prob_reg_sync} is solvable.  In particular, the dynamic protocol \eqref{pscp2final} solves the scalable regulated output synchronization problem based on localized information exchange for any $N$ and any graph
	$\mathscr{G}\in\mathbb{G}^N_\mathscr{C}$. 
\end{theorem}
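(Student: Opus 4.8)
The plan is to mirror the proof of Theorem \ref{thm_out_syn}, but to measure each compensated agent against the reference trajectory rather than against the $N$th agent, and to work directly with the expanded Laplacian $\tilde{L}$ --- whose eigenvalues already have positive real part, so that no analogue of Lemma \ref{lemmaLbar} is needed. First I would invoke Lemma \ref{lem-exo} and Step $2$ to write each compensated agent in the homogenized form $\dot{\bar{x}}_i=\check{A}_r\bar{x}_i+\check{B}_r(v_i+\rho_i)$, $y_i=\check{C}_r\bar{x}_i$, with $\rho_i=C_{i,s}\omega_i$, $\dot{\omega}_i=A_{i,s}\omega_i$ and $A_{i,s}$ Hurwitz, while the remodeled exosystem reads $\dot{\check{x}}_r=\check{A}_r\check{x}_r$, $y_r=\check{C}_r\check{x}_r$. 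I would then introduce the regulation error $\tilde{x}_i=\bar{x}_i-\check{x}_r$, so that $y_i-y_r=\check{C}_r\tilde{x}_i$ and the common $\check{A}_r\check{x}_r$ terms cancel, giving $\dot{\tilde{x}}_i=\check{A}_r\tilde{x}_i+\check{B}_r(v_i+\rho_i)$. Stacking the errors into $\tilde{x},\hat{x},\chi$, I would use $\tilde{\zeta}=(\tilde{L}\otimes\check{C}_r)\tilde{x}$ and $\hat{\zeta}=(L\otimes I)\chi$.

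The crucial algebraic step is that the explicit $\iota_i$ terms in \eqref{pscp2final} combine with the plain Laplacian to reproduce $\tilde{L}$: in the $\hat{x}_i$-equation the contribution $-\check{B}_rK\hat{\zeta}_i-\iota_i\check{B}_rK\chi_i$ collapses, upon stacking, to $-(\tilde{L}\otimes\check{B}_rK)\chi$, and in the $\chi_i$-equation the contribution $-\hat{\zeta}_i-\iota_i\chi_i$ collapses to $-(\tilde{L}\otimes I)\chi$, using $\tilde{L}=L+\diag\{\iota_i\}$. This yields exactly the same closed-loop structure as in Theorem \ref{thm_out_syn}, with $\bar{L}$ replaced by $\tilde{L}$ and $(C,A,B)$ by $(\check{C}_r,\check{A}_r,\check{B}_r)$. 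I would then apply the same change of coordinates $e=\tilde{x}-\chi$ and $\bar{e}=(\tilde{L}\otimes I)\tilde{x}-\hat{x}$ to bring the $(\tilde{x},e,\bar{e},\omega)$ dynamics into block upper-triangular form with diagonal blocks $I\otimes(\check{A}_r-\check{B}_rK)$, $I\otimes\check{A}_r-\tilde{L}\otimes I$, $I\otimes(\check{A}_r-H\check{C}_r)$ and $A_s=\diag\{A_{i,s}\}$.

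Stability then reduces to checking these four blocks, and three are immediate: $\check{A}_r-\check{B}_rK$, $\check{A}_r-H\check{C}_r$ and every $A_{i,s}$ are Hurwitz by construction. The block I expect to be the main obstacle --- and the place where the hypotheses are genuinely used --- is $I\otimes\check{A}_r-\tilde{L}\otimes I$: as in \eqref{boundapl}, putting $\tilde{L}$ in (Jordan) triangular form exhibits its spectrum as $\{\mu-\lambda_i\}$ with $\mu\in\sigma(\check{A}_r)$ and $\lambda_i\in\sigma(\tilde{L})$. Since $\mathscr{G}\in\mathbb{G}^N_\mathscr{C}$ guarantees $\re\lambda_i>0$ (as already noted for $\tilde{L}$), while Lemma \ref{lem-exo} together with Assumption \ref{ass-exo} forces every eigenvalue of $\check{A}_r$ onto the imaginary axis, i.e.\ $\re\mu=0$, each $\mu-\lambda_i$ has strictly negative real part and the block is Hurwitz.

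Consequently the whole triangular system is asymptotically stable; in particular $\tilde{x}\to0$, so that $y_i-y_r=\check{C}_r\tilde{x}_i\to0$, which is \eqref{reg_synch_out}, for any $N$ and any $\mathscr{G}\in\mathbb{G}^N_\mathscr{C}$. The only substantive difference from Theorem \ref{thm_out_syn} is thus conceptual rather than computational: because $\tilde{L}$ is invertible with right-half-plane spectrum, the argument is carried out on the full $N$-agent stacked system instead of the reduced $(N-1)$-dimensional one, and the $\iota_i$ injections are precisely what convert the network Laplacian $L$ into $\tilde{L}$ inside the protocol.
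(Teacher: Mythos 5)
Your proposal is correct and takes essentially the same route as the paper's proof: the same error coordinates $\tilde{x}_i=\bar{x}_i-\check{x}_r$, $e=\tilde{x}-\chi$, $\bar{e}=(\tilde{L}\otimes I)\tilde{x}-\hat{x}$, the same block upper-triangular closed loop over the full $N$-agent stack with $\tilde{L}$ replacing $\bar{L}$, and the same spectral argument that $I\otimes\check{A}_r-\tilde{L}\otimes I$ is Hurwitz because $\re\lambda_i>0$ while the eigenvalues of $\check{A}_r$ lie on the imaginary axis. The only difference is presentational: you spell out the bookkeeping by which the $\iota_i$ injection terms in \eqref{pscp2final} convert $L$ into $\tilde{L}=L+\diag\{\iota_i\}$ upon stacking, a step the paper performs silently.
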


\begin{proof}[Proof of Theorem \ref{thm_reg_out_syn}] Similar to Theorem \ref{thm_out_syn}, we design a pre-compensator \eqref{pre_con} for each agent $i \in \{1,..., N\}$ to obtain our target model as
	
	\begin{equation}\label{sys_reg_homo}
	\begin{system*}{cl}
	\dot{\bar{x}}_i&=\check{A}_r\bar{x}_i+\check{B}_r(v_i+\rho_i),\\
	{y}_i&=\check{C}_r\bar{x}_i,
	\end{system*}
	\end{equation}
	where $\rho_i$ is given by \eqref{sys-rho}. Let $\tilde{x}_i=\bar{x}_i-\check{x}_r$ and define
	\begin{equation*}
	\tilde{x}=\begin{pmatrix}
	\tilde{x}_1\\ \vdots\\ \tilde{x}_N
	\end{pmatrix},\hat{x}=\begin{pmatrix}
	\hat{x}_1\\ \vdots\\ \hat{x}_N
	\end{pmatrix},\chi=\begin{pmatrix}
	\chi_1\\ \vdots\\ \chi_N
	\end{pmatrix},\rho=\begin{pmatrix}
	\rho_1\\ \vdots\\ \rho_N\end{pmatrix},\omega=\begin{pmatrix}
	\omega_1\\ \vdots\\ \omega_N\end{pmatrix}
	\end{equation*}
	then, we have the following closed-loop system
	\begin{equation}
	\begin{system}{cl}
\dot{\tilde{x}}&=(I\otimes \check{A}_r)\tilde{x}-(I\otimes \check{B}_rK)\chi+(I\otimes \check{B}_r)\rho\\
\dot{\hat{x}}&=(I\otimes (\check{A}_r-H\check{C}_r))\hat{x}-(\tilde{L}\otimes \check{B}_rK)\chi+(\tilde{L}\otimes H\check{C}_r)\tilde{x}\\
\dot{\chi}&=(I\otimes(\check{A}_r-\check{B}_rK)-\tilde{L}\otimes I)\chi+\hat{x}
\end{system}
	\end{equation}
		
	By defining $e=\tilde{x}-\chi$ and $\bar{e}=(\tilde{L}\otimes I)\tilde{x}-\hat{x}$, we can obtain  
	\begin{equation}\label{newsystem3}
\begin{system*}{cl}
\dot{\tilde{x}}&=(I\otimes (\check{A}_r-\check{B}_rK ))\tilde{x}+(I\otimes \check{B}_rK )e+(I\otimes \check{B}_r)C_s\omega\\
\dot{e}&=(I\otimes \check{A}_r-\tilde{L}\otimes I)e+\bar{e}+(I\otimes \check{B}_r)C_s\omega\\
\dot{\bar{e}}&=(I\otimes(\check{A}_r-H\check{C}_r))\bar{e}+(\tilde{L}\otimes \check{B}_r)C_s\omega
\end{system*}
	\end{equation}
	
	Similar to Theorem \ref{thm_out_syn}, since all eigenvalues of $\tilde{L}$ have positive real part, we obtain that $e$ and $\bar{e}$ have asymptotically stable dynamics. Therefore, we just need to prove the stability of 
	\begin{equation}\label{statefeedback2}
	\dot{\tilde{x}}=(I\otimes (\check{A}_r-\check{B}_rK ))\tilde{x}
	\end{equation}
	
	Thus, according to the result of Theorem \ref{thm_out_syn}, 
	we can obtain the asymptotic stability of \eqref{statefeedback2}, i.e., $\lim_{t\to \infty}\tilde{x}_i\to 0$. It implies that $\bar{x}_i-\check{x}_r\to0$ which proves the result.
\end{proof}

\section{Simulation Results}
In this section, we will illustrate the effectiveness of our protocols with a numerical example for output synchronization of heterogeneous MAS with partial-state coupling. We show that our protocol design \eqref{pscp1final} is scale-free and it works for any graph with any number of agents.  
\begin{figure}[t]
	\includegraphics[width=4cm, height=3.5cm]{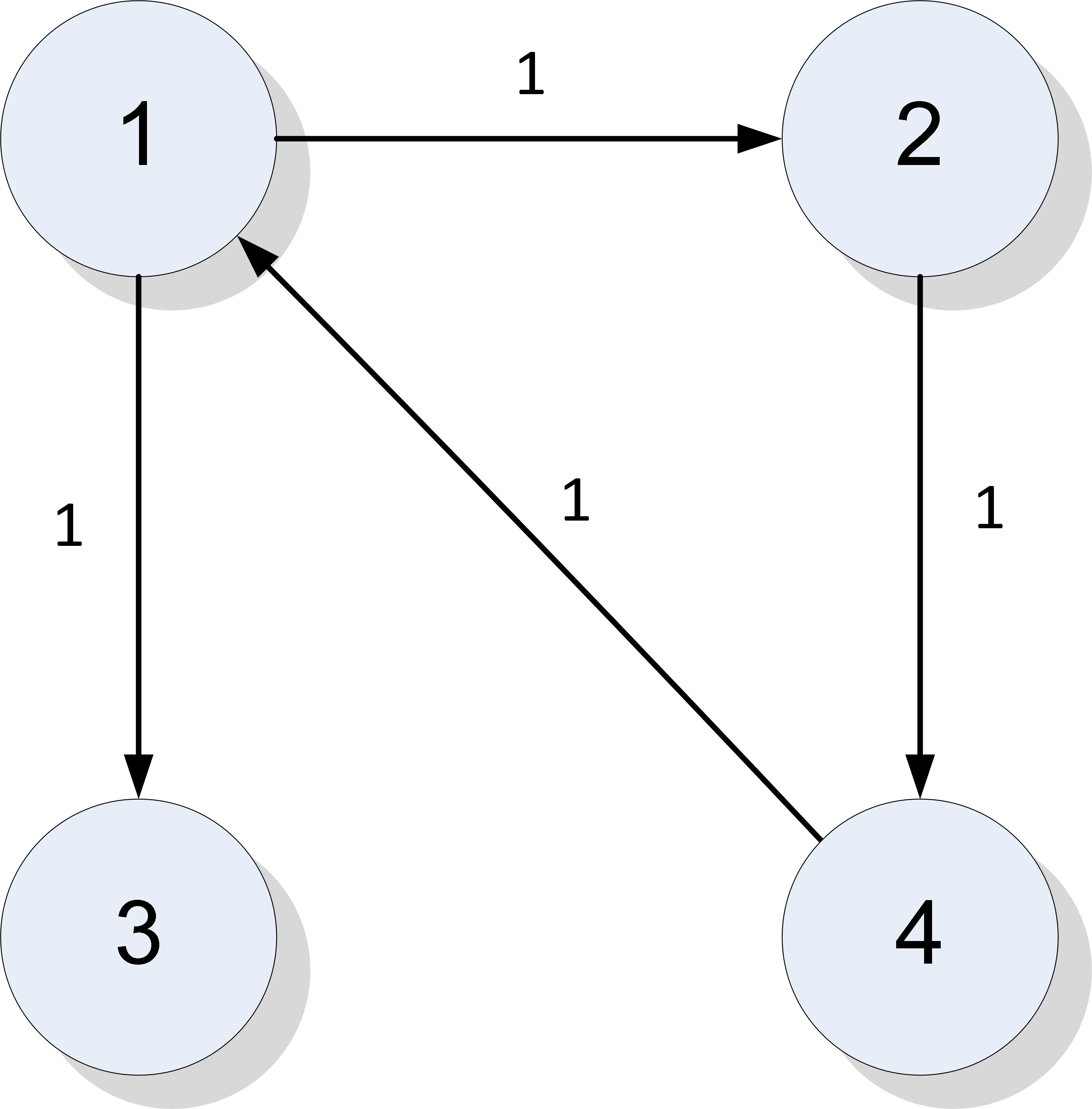}
	\centering
	\caption{Communication network topology for case $1$}\label{Graph_1}
\end{figure}
\begin{figure}[t]
	\includegraphics[width=4cm, height=3.7cm]{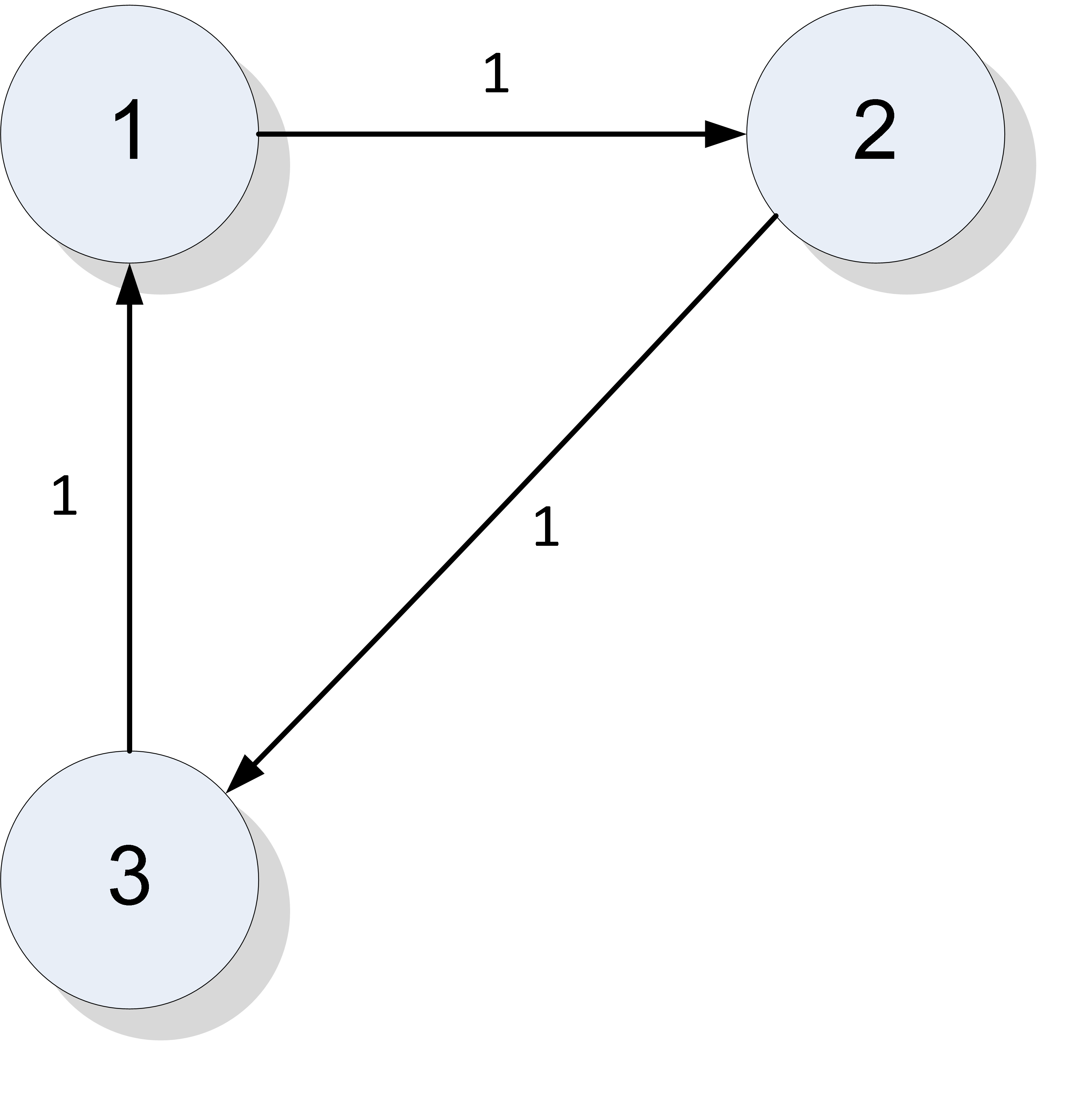}
	\centering
	\caption{Communication network topology for case $2$}\label{Graph_2}
\end{figure}
\begin{figure}[t!]
	\includegraphics[width=6cm, height=3.5cm]{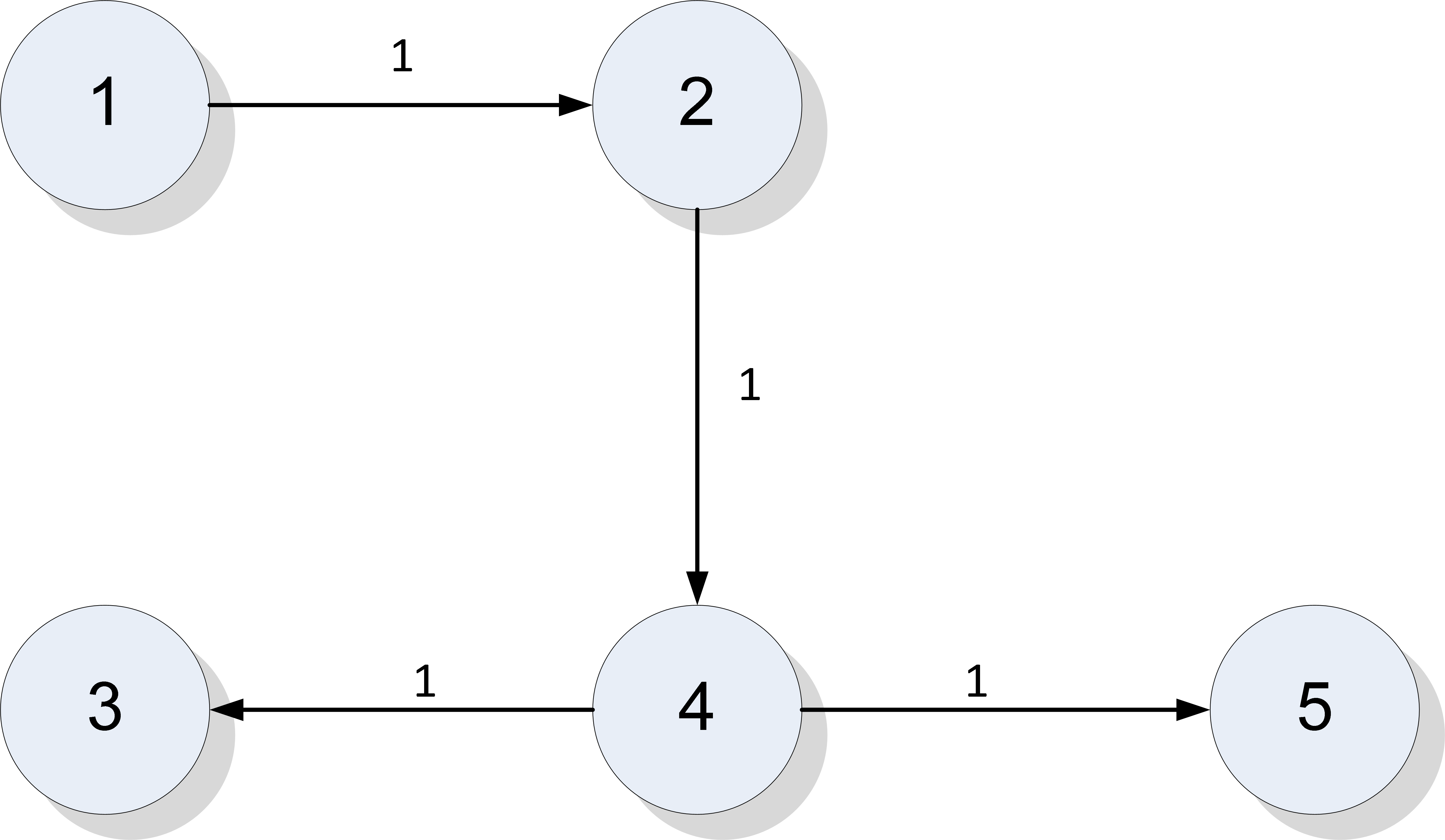}
	\centering
	\caption{Communication network topology for case $3$}\label{Graph_3}
\end{figure}
Consider the agents models \eqref{hete_sys} as:
\begin{equation*}
\begin{system*}{cl}
A_1&=\begin{pmatrix}
0&1&0&0\\0&0&1&0\\0&0&0&1\\0&0&0&0
\end{pmatrix},\quad B_1=\begin{pmatrix}
0&1\\0&0\\1&0\\0&1
\end{pmatrix},\\ C_1&=\begin{pmatrix}
1&0&0&0
\end{pmatrix},\quad
C^m_1=I\\
A_2&=\begin{pmatrix}
0&1&0\\0&0&1\\0&0&0
\end{pmatrix},\quad B_2=\begin{pmatrix}
0\\0\\1
\end{pmatrix},\\ C_2&=\begin{pmatrix}
1&0&0
\end{pmatrix},\quad
C^m_2=I,
\end{system*}
\end{equation*}
and for $i=3,4$

\begin{equation*}
\begin{system*}{cl}
A_i&=\begin{pmatrix}
-1&0&0&-1&0\\0&0&1&1&0\\0&1&-1&1&0\\0&0&0&1&1\\-1&1&0&1&1
\end{pmatrix},\quad B_i=\begin{pmatrix}
0&0\\0&0\\0&1\\0&0\\1&0
\end{pmatrix},\\ C_i&=\begin{pmatrix}
0&0&0&1&0
\end{pmatrix},\quad
C^m_i=I,
\end{system*}
\end{equation*}
and 
\begin{equation*}
\begin{system*}{cl}
A_5&=\begin{pmatrix}
0&1&0\\0&0&1\\1&1&0
\end{pmatrix},\quad B_5=\begin{pmatrix}
0\\0\\1
\end{pmatrix},\\ C_5&=\begin{pmatrix}
1&0&0
\end{pmatrix},\quad
C^m_2=I,
\end{system*}
\end{equation*}

Note that $\bar{n}_d=3$, which is the degree of infinite zeros of $(C_2,A_2,B_2)$. We choose $n_q=3$ and matrices $A,B,C$ as following.
\begin{equation*}
\begin{system*}{cl}
A&=\begin{pmatrix}
0&1&0\\0&0&1\\0&-1&0
\end{pmatrix},\quad B=\begin{pmatrix}
0\\0\\1
\end{pmatrix}, \quad C=\begin{pmatrix}
1&0&0
\end{pmatrix}\\
\end{system*}
\end{equation*}
and $K$ and $H$ as:
\[
K=\begin{pmatrix}
30\\30\\10
\end{pmatrix}
\quad H=\begin{pmatrix}
6&10&0
\end{pmatrix}
\]

We consider three different heterogeneous MAS with different number of agents and different communication topologies to show that the designed protocols are independent of the
communication networks and the number of agents $N$.\\

\begin{itemize}
	\item \emph{Case $1$:} Consider a MAS with $4$ agents with agent models $(C_i, A_i, B_i)$ for $i \in \{1,\hdots,4\}$, and directed communication topology shown in Figure \ref{Graph_1}.\\

	\item \emph{Case $2$:} In this case, we consider a MAS with $3$ agents with agent models $(C_i, A_i, B_i)$ for $i \in \{1,\hdots,3\}$ and directed communication topology shown in Figure \ref{Graph_2}.
	\\
	
	\item \emph{Case $3$:} 
	Finally, we consider a MAS with $5$ agents with agent models $(C_i, A_i, B_i)$ for $i \in \{1,\hdots,5\}$ and directed communication topology shown in Figure \ref{Graph_3}.\\
	
\end{itemize}
\begin{figure}[t]
	\includegraphics[width=9cm, height=5cm]{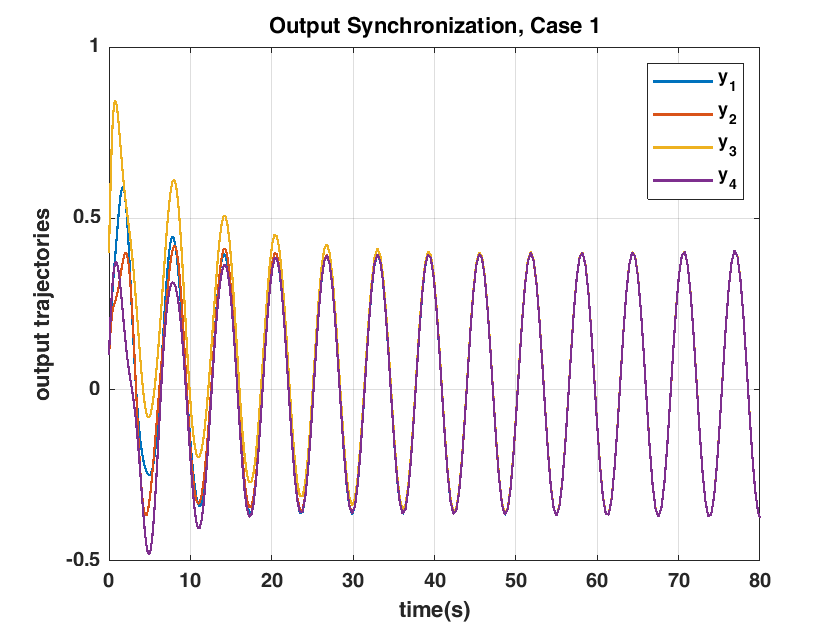}
	\centering
	\caption{Output synchronization for communication networks of case $1$}\label{Results_case1}
\end{figure} 
\begin{figure}[t]
	\includegraphics[width=9cm, height=5cm]{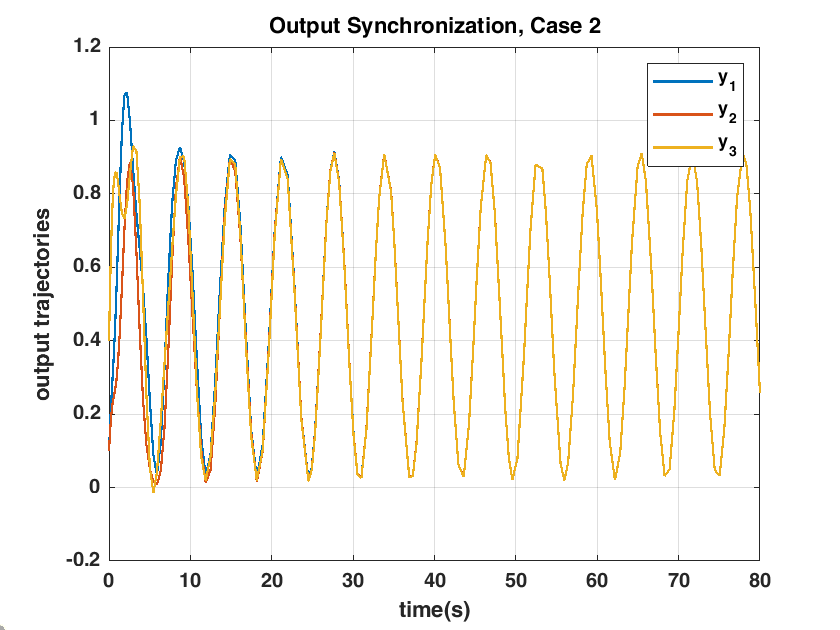}
	\centering
	\caption{Output synchronization for communication networks of case $2$}\label{Results_case2}
\end{figure} 
\begin{figure}[t]
	\includegraphics[width=9cm, height=5cm]{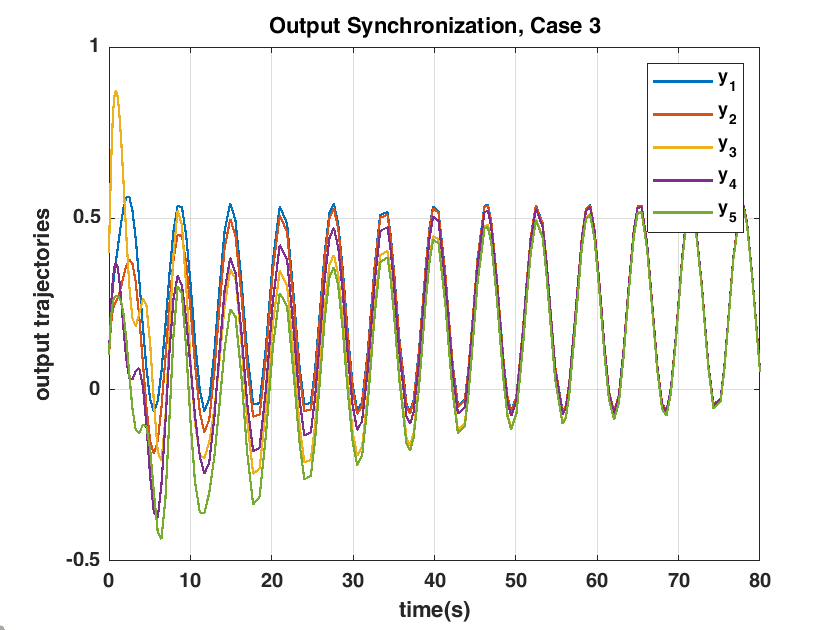}
	\centering
	\caption{Output synchronization for communication networks of case $3$}\label{Results_case3}
\end{figure} 

Figures \ref{Results_case1}-\ref{Results_case3} show that output synchronization is achieved for all three cases. The simulation results also confirm that the protocol design is independent of the communication graph and is scale-free so that we can achieve output synchronization with one-shot protocol design, for any graph with any number of agents.\\

\bibliographystyle{plain}
\bibliography{referenc}

\end{document}